\documentclass[12pt]{article}
\usepackage{graphicx}
\def\hybrid{\topmargin 0pt      \oddsidemargin 0pt
        \headheight 0pt \headsep 0pt
       \voffset-1cm
        \textwidth 6.25in       % A4 paper
       \textheight 9.5in       % A4 paper
        \marginparwidth 0.0in
        \parskip 5pt plus 1pt   \jot = 1.5ex}
\catcode`\@=11
\def\marginnote#1{}

\newcount\hour
\newcount\minute
\newtoks\amorpm
\hour=\time\divide\hour by60
\minute=\time{\multiply\hour by60 \global\advance\minute by-\hour}
\edef\standardtime{{\ifnum\hour<12 \global\amorpm={am}%
        \else\global\amorpm={pm}\advance\hour by-12 \fi
        \ifnum\hour=0 \hour=12 \fi
        \number\hour:\ifnum\minute<10 0\fi\number\minute\the\amorpm}}
\edef\militarytime{\number\hour:\ifnum\minute<10 0\fi\number\minute}

\def\draftlabel#1{{\@bsphack\if@filesw {\let\thepage\relax
   \xdef\@gtempa{\write\@auxout{\string
      \newlabel{#1}{{\@currentlabel}{\thepage}}}}}\@gtempa
   \if@nobreak \ifvmode\nobreak\fi\fi\fi\@esphack}
        \gdef\@eqnlabel{#1}}
\def\@eqnlabel{}
\def\@vacuum{}
\def\draftmarginnote#1{\marginpar{\raggedright\scriptsize\tt#1}}

\def\draftlabel#1{{\@bsphack\if@filesw {\let\thepage\relax
   \xdef\@gtempa{\write\@auxout{\string
      \newlabel{#1}{{\@currentlabel}{\thepage}}}}}\@gtempa
   \if@nobreak \ifvmode\nobreak\fi\fi\fi\@esphack}
        \gdef\@eqnlabel{#1}}
\def\@eqnlabel{}
\def\@vacuum{}
\def\draftmarginnote#1{\marginpar{\raggedright\scriptsize\tt#1}}

\def\draft{\oddsidemargin -.5truein
        \def\@oddfoot{\sl preliminary draft \hfil
        \rm\thepage\hfil\sl\today\quad\militarytime}
        \let\@evenfoot\@oddfoot \overfullrule 3pt
        \let\label=\draftlabel
        \let\marginnote=\draftmarginnote
   \def\@eqnnum{(\theequation)\rlap{\kern\marginparsep\tt\@eqnlabel}%
\global\let\@eqnlabel\@vacuum}  }

\def\numberbysection{\@addtoreset{equation}{section}
        \def\theequation{\thesection.\arabic{equation}}}

\def\underline#1{\relax\ifmmode\@@underline#1\else
        $\@@underline{\hbox{#1}}$\relax\fi}

\def\titlepage{\@restonecolfalse\if@twocolumn\@restonecoltrue\onecolumn
     \else \newpage \fi \thispagestyle{empty}\c@page\z@
        \def\thefootnote{\fnsymbol{footnote}} }

\def\endtitlepage{\if@restonecol\twocolumn \else  \fi
        \def\thefootnote{\arabic{footnote}}
        \setcounter{footnote}{0}}  %\c@footnote\z@ }
\relax

\hybrid

\newfont{\Bbb}{msbm10 scaled 1\@ptsize00}
\newfont{\Bbbb}{msbm7 scaled 1\@ptsize00}
\newcommand{\CC}{\mbox{\Bbb C}}
\newcommand{\CCC}{\mbox{\Bbbb C}}

\newcommand{\DDD}{\raise-1pt\hbox{$\mbox{\Bbbb D}$}}

\newcommand{\RR}{\mbox{\Bbb R}}

\newcommand{\UUU}{\raise-1pt\hbox{$\mbox{\Bbbb U}$}}

\newcommand{\z}{\raise-1pt\hbox{$\mbox{\Bbbb Z}$}}

\newcommand{\sss}{\raise-1pt\hbox{$\mbox{\Bbbb S}$}}

\def\beq{\begin{equation}}
\def\eeq{\end{equation}}
\def\p{\partial}

\def\gr{{\bf grad \,}}
\def\Gr{{\bf Grad \,}}

\newtheorem{theorem}{Theorem}
\newtheorem{lemma-definition}{Lemma-Definition}[section]

\newtheorem{remark}{Remark}
\newtheorem{definition}{Definition}
\def\square{\hfill
{\vrule height6pt width6pt depth1pt} \break \vspace{.01cm}}

\begin{document}

\begin{titlepage}

\title{Gradient nature of Laplacian growth}

\author{
A.~Zabrodin\thanks{
National Research University Higher School of Economics,
20 Myasnitskaya Ulitsa,
Moscow 101000, Russia and
NRC ``Kurchatov institute'', Moscow, Russia;
e-mail: zabrodin@itep.ru}}

\date{September 2026}
\maketitle

\vspace{-7cm} \centerline{ \hfill ITEP-TH-34/26}\vspace{7cm}

%\begin{center}

\hfill{\it To Andrey Pogrebkov on his 80th birthday}

%\end{center}

\begin{abstract}

For a class of growth processes of Laplacian type in the plane, 
we suggest an interpretation as a ``gradient descent'' in the 
space of smooth closed curves. More precisely, we show that boundary of a
growing domain moves along a gradient of a certain functional
in the space of curves. In the simplest cases this functional 
is $\log (1/r)$, where $r$ is the external conformal radius 
of the growing domain.

\end{abstract}

\end{titlepage}

\vspace{5mm}

%

%\newpage
\tableofcontents

\vspace{5mm}

\section{Introduction}

Growth problems of Laplacian type such as Hele-Shaw viscous flows
refer to dynamics of a moving front (an interface)
between two distinct phases driven by a harmonic scalar field.
These essentially nonlinear
and non-local problems attract much attention
for quite a long time \cite{list}.
The Laplacian growth (LG) problem
appears in different physical and mathematical contexts
and has important practical applications.
For reviews see \cite{RMP,EV,book}.
In this paper, we shall have in mind the dynamics 
of an interface between two
incompressible fluids with very different viscosities.
In practice the 2D geometry is
realized in the Hele-Shaw cell -- a narrow gap between two parallel
plates (Fig.~\ref{figure:hele}). 

An informal description of the process in the cell is as follows.
(The precise mathematical formulation of the 
LG problem is given in the next section.)
A drop of an incompressible fluid with negligible viscosity 
(say, water) is surrounded by a viscous fluid (e.g., oil), 
which is drawn away from the edges of the cell 
(mathematically, this means that the ``pump'' is located at infinity).
The water drop ``grows'' over time due to the water 
entering through the thin tube at the center of the cell, 
increasing its surface area.
The law according to which the boundary of the drop grows 
is called Darcy's law.
It states that velocity at any
boundary point $\xi $ is proportional to the pressure gradient 
in the viscous fluid at that point\footnote{For simplicity, 
the filtration coefficient is put equal to 1,
cf. (\ref{darcy}).}:
\beq\label{int1}
{\bf v}(\xi ) = -\gr p(\xi ).
\eeq

If surface tension is negligibly
small, it can be considered that the pressure in oil 
is constant along the entire boundary. 
This case is seemingly simpler for theoretical analysis.
Apart from its (apparent) simplicity,
the case of vanishing surface tension is distinguished
in at least two respects.
\begin{itemize}
\item
There is a fairly large class of nontrivial 
exact solutions \cite{EV,book,exact}. 
However, many of them
``explode'' within a finite time, i.e., 
they develop singularities, after which the solutions loose 
any physical meaning. 
The singularities are regularized by introducing a surface
tension (though small but nonzero). However, the standard way
of introducing surface tension
breaks the nice mathematical
structure of the problem.
\item
In the limit of zero surface tension, the LG dynamics becomes 
highly unstable (i.e., any arbitrarily 
small disturbances lead, after some time, to significant 
consequences).
This instability leads to the fact 
(observed in experiments) 
that even if the drop initially had a circular shape, 
over time, the interface becomes increasingly meandering, 
with a lot of ``fingers'' and fjords, resembling the 
fractal structure of DLA clusters (see Fig. \ref{figure:fingers}). 
\end{itemize}

\begin{figure}[t]
\centering{\includegraphics[scale=0.3]{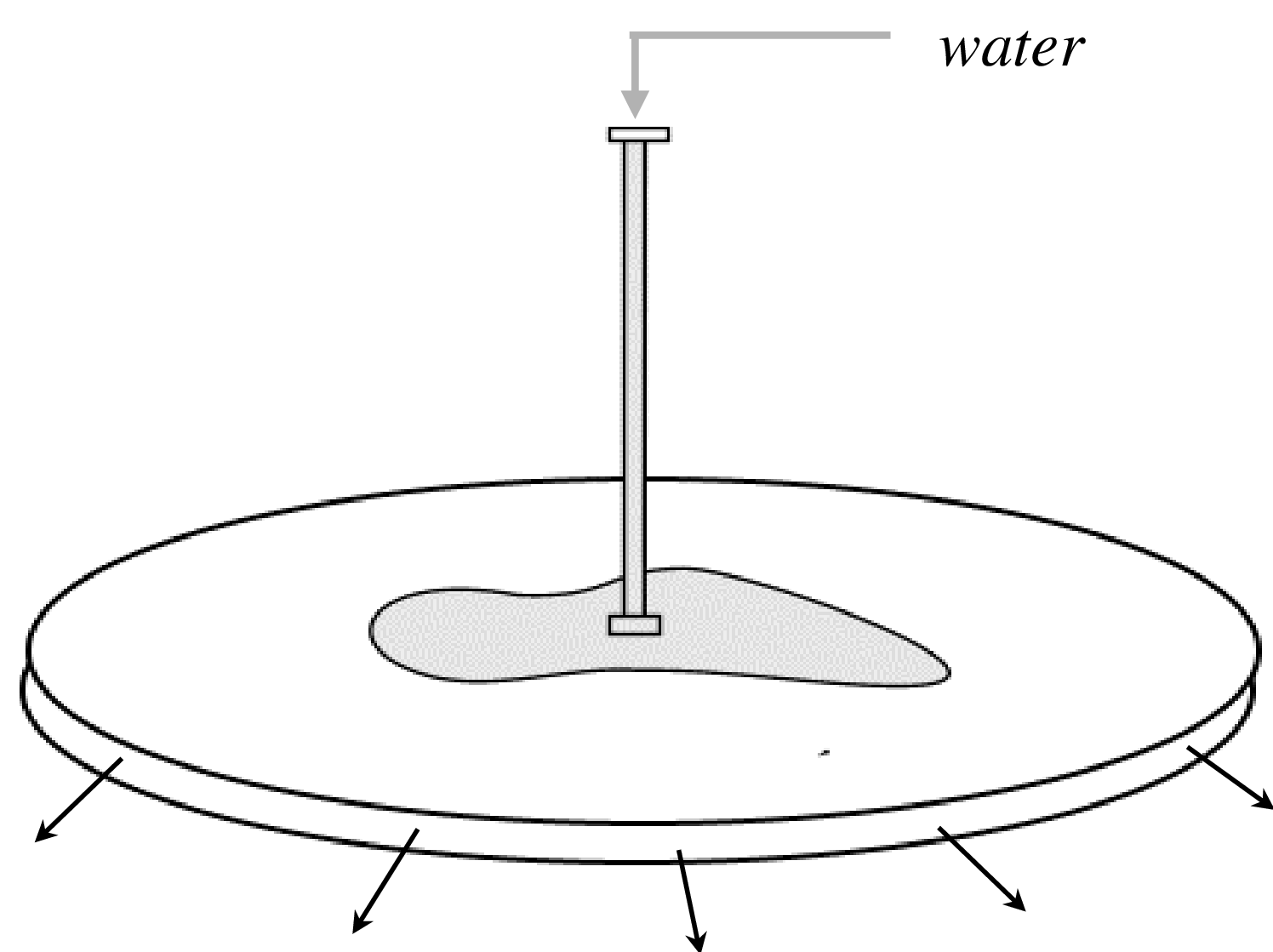}}
\vspace{1cm}
\caption{The Hele-Shaw cell.}
\label{figure:hele}
\end{figure}

\noindent
Thus, because of the instabilities,
the exact solutions hardly have physical meaning over 
long timescales. To adequately describe the dynamics 
at zero surface tension, 
taking into account random small 
fluctuations and perturbations seems inevitable.
Perhaps it is necessary to abandon the 
deterministic description and include, in a clever way, 
some stochastic elements (a noise) in the theory, passing to
a probabilistic description.
In other words,
it cannot be ruled out that the stochasticity  itself 
might play a role of a kind of ``regularization''
at zero surface tension.

\begin{figure}[t]
\centering{\includegraphics[scale=0.7]{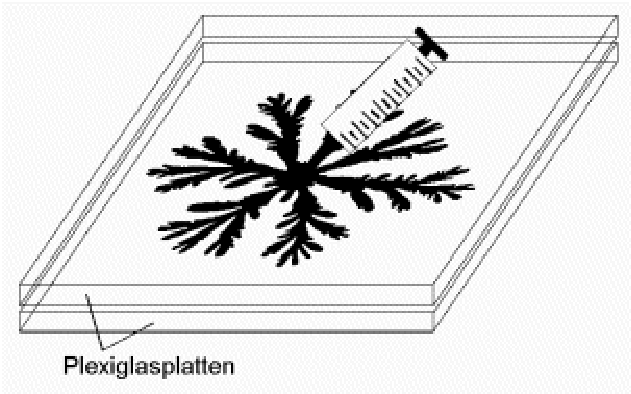}}
\vspace{1cm}
\caption{The fractal-like structure of the 
interface at large timescales.}
\label{figure:fingers}
\end{figure}

Remarkably, the
2D LG with zero surface tension has
a hidden integrable structure.
The first indication of this was obtained in Richardson’s work
\cite{Richardson}, where an infinite set of conserved quantities
was found for the problem in the radial geometry, as in 
Fig. \ref{figure:hele}.
The explanation for this and the true place 
of LG within the entire variety 
of integrable systems was obtained in the work
\cite{MWWZ}
and in the later works \cite{WZ00}-\cite{Z07}.
Specifically, a whole class of LG-like processes  
was embedded into an integrable
hierarchy of nonlinear partial differential equations which is
a zero dispersion version \cite{KriW,TakTak}
of the 2D Toda hierarchy \cite{UenoTakasaki}. 
Also, it should be noted that the
integrable structure unites the LG problem with 
important areas 
of mathematics and theoretical physics such as 
inverse potential problem, quadrature domains, random matrices,
theory of solitons and $c=1$ string theory. Some of these
links are reviewed in \cite{MWPT}.  
However, despite these wonderful mathematical structures, 
there is still no complete clarity on how to incorporate 
stochastic elements into the LG dynamics 
(and in a way that is somehow consistent with them),
and the problem is under discussion. See, e.g., the series
of works \cite{AMW16}--\cite{A25}, where consideration is based on 
the variational principle of maximum entropy, and also the 
close connection with models of random matrices is explored.

In this note, I would like to draw attention 
to a relatively simple property of the LG 
which is, however, not directly 
related to the integrable structures mentioned above. 
Nevertheless, this property could potentially pave a smart way 
for introducing a noise into the problem, as is mentioned above.
As far as I know, this possibility has not been discussed in 
the literature before.

The point is that, as shown in this note, 
the LG with zero surface tension 
can be understood as a ``gradient descent'' 
in a continually infinite-dimensional space of curves 
(contours, loops) on the plane. 
Namely, we show that the interface (boundary of the
growing domain) moves along a properly defined 
gradient of a certain functional
in the space of curves. In the simplest cases this functional 
is $\log (1/r)$, where $r$ is the external conformal radius 
of the growing domain:
\beq\label{int2}
{\bf V} =-\Gr \Bigl (\log (1/r)\Bigr ).
\eeq
Here $\Gr$ is the gradient in the space of 
loops and ${\bf V}$ is the ``velocity'' of the curve in that space
understood as the set $\{ {\bf v}(\xi )\}$ of local velocities 
of points of the curve. More precise definitions are given in 
Section 4. Comparing equations (\ref{int1}) and (\ref{int2}),
one can say that the latter is a kind of ``Darcy's law in the space
of loops''.

The paper is organized as follows. Section 2 is a reminder 
of what is LG both in physical (Section 2.1) and mathematical
terms (Section 2.2). Section 3 is an elementary reminder of
what is gradient of a function in a finite-dimensional space.
The core of the paper is Section 4.
In Section 4.1, an attempt is made to define the gradient 
in the space of curves, based on analogies 
with the finite-dimensional case. Section 4.2 contains the
main results, Theorems \ref{theorem:main1}, \ref{theorem:main2}
and \ref{theorem:main3}, which directly follow from the definitions
given in Section 4.1
and the Hadamard variational formula. In the last Section 5
possible perspectives are discussed.

\section{Laplacian growth}

\subsection{Physical description}

We begin with a brief description of the LG in physical terms.
For a more detailed exposition see
\cite{RMP,book,MWPT}

In practice, the 2D geometry is
realized in a narrow gap between two parallel
glass plates (the Hele-Shaw cell),
see review \cite{RMP} and references therein.
A compact
plane domain is occupied by a fluid with a 
negligible viscosity (water).
We call it water droplet.
Let its exterior be occupied by a viscous fluid (oil).
The liquids are assumed to be 
incompressible and non-mixing with each other.
The oil/water interface is assumed to be a smooth closed
curve.
Oil is sucked out with a constant rate through a sink
(a pump) placed
at some fixed point (which may be at infinity) while
water is injected
into the water droplet. 

Since 
water has negligible viscosity, pressure
in it is constant. We put it equal to zero without loss of 
generality.
In the oil domain, the local velocity
${\bf v}=(v_x,v_y)$ of the fluid is proportional to
the gradient of pressure $p=p(x,y)$
(Darcy's law):
\beq\label{darcy}
{\bf v}=-\kappa \, \gr p,
\eeq
where $\kappa$ is called
the filtration coefficient. (In general, it may depend 
on coordinates in the plane.)
This means that a viscous fluid in the Hele-Shaw cell flows 
at every point in the direction of the fastest pressure decrease.

In particular, the Darcy law holds on the
outer side of the interface thus
governing its dynamics:
\beq\label{darcy0}
v_n=-\kappa \p_n p.
\eeq
Here $v_n$ is the growth velocity, which is
normal to the interface\footnote{The tangential component 
of the velocity, if any, does not lead to a change in the shape 
of the droplet and therefore has no physical meaning.}, and
$\p_n $ is the normal derivative.
Since the fluids are incompressible ($\gr  {\bf v} =0$),
the Darcy law implies
that pressure $p$
is a harmonic function in the exterior (oil) domain
except at the point where the oil pump is placed.
There is a logarithmic singularity at that point.
A nonzero surface tension means
a pressure jump across the interface
proportional to its local curvature. 
If surface tension is negligibly
small, there is no jump and
it can be considered that the pressure in oil 
is zero along the entire boundary. 

Usually the pump is placed at infinity. 
In this case the idealized problem (with zero surface 
tension) is formulated as follows:
\beq\label{darcy2}
\left \{
\begin{array}{l}
v_n= -\kappa \p_n p,
\\ \\
\Delta p=0 \quad \mbox{in oil},
\\ \\
p=0 \quad \mbox{on the interface},
\\ \\
p(x,y)=-\log \rho +  \ldots \quad \mbox{as $\rho =\sqrt{x^2+y^2}\to
\infty$}.
\end{array}
\right.
\eeq
Here $\Delta =\p_x^2 +\p_y^2$ is the Laplace operator. 
So, from the mathematical perspective, one is dealing with 
a boundary value problem of the Dirichlet type with a 
free moving boundary.

\subsection{Mathematical formulation}

In two dimensions, it is instructive to pass to 
the complex coordinates $z=x+iy$, $\bar z=x-iy$. This allows one 
to apply methods of complex analysis.
To give a mathematical formulation
of LG in these terms, we need some standard notions of 
complex analysis and boundary value problems.

\subsubsection{Conformal map and Green function}

Let us denote the boundary curve (the interface)
by $\gamma$ and let ${\sf D}$ be the compact
(interior) domain bounded by it: $\gamma =\p {\sf D}$ (it is the water
domain).
Its complement (the oil domain) is the 
non-compact domain ${\sf D}'=
\CC \setminus {\sf D}\ni \infty$. Let $w(z)$ be the conformal map
from ${\sf D}'$ onto the exterior of the unit disk such that
$w(\infty )=\infty$ and $w'(\infty )=1/r \in \RR_+$. 
By the Riemann theorem, such a map exists and is unique due to
the normalization condition at infinity.
The Laurent expansion of $w(z)$ around $\infty$ has the form
$$
w(z)=\frac{z}{r} +\sum_{k\geq 0} u_k z^{-k}, \quad z\to \infty .
$$
The real
positive number $r$ is called the (external) conformal radius
of the domain ${\sf D}$.

Next, we need 
the Green function of the Dirichlet boundary value
problem in ${\sf D}'$.
It is a (unique) solution to the equation
$\Delta _z G(z, \zeta )=\Delta_{\zeta}(z,\zeta )=2\pi
\delta^{(2)}(z-\zeta )$ vanishing on $\gamma =\p {\sf D}$,
i.e. $G(z, \xi )=0$ for all $\xi \in \p {\sf D}$. 
Here $\Delta_z =4\p_z \p_{\bar z}$ is the Laplace operator in the
variable $z$.
In other words,
as $\zeta \to z$, the Green function has the logarithmic singularity
$G(z, \zeta )=\log |z-\zeta |+\ldots $, and it is harmonic
elsewhere in ${\sf D}'$. 
The Green function can be expressed through
$w(z)$ by
the formula
\beq\label{G1}
G(z, \zeta ) =\log \left |
\frac{w(z) - w(\zeta )}{1-w(z)\overline{w(\zeta )}}
\right |.
\eeq
In particular,
\beq\label{G2}
G(z, \infty )=-\log |w(z)|.
\eeq

\begin{figure}[t]
\centering{\includegraphics[scale=0.7]{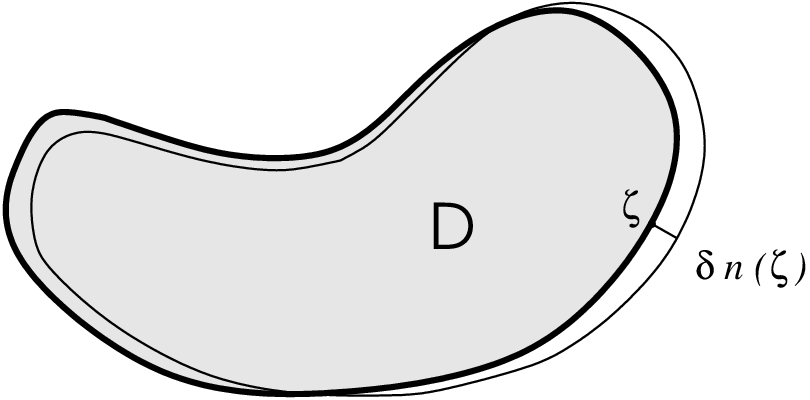}}
\vspace{1cm}
\caption{Infinitesimal variation.}
\label{figure:var}
\end{figure}

We also need the Hadamard variational formula \cite{Schiffer}.
Let $\delta n (\xi )$ be the infinitesimal 
normal displacement of the curve
$\gamma$ at a point $\xi \in \gamma$ 
with the convention that $\delta n>0$ for outward shifts 
and $\delta n<0$ for inward ones (see Fig.~\ref{figure:var}). 
Then the Hadamard formula for variation of the Green function is 
\beq\label{G3}
\delta G(a,b)=\frac{1}{2\pi} \oint_{\gamma}
\p_{n_\xi}G(a, \xi )\p_{n_\xi}G(b, \xi ) \delta n(\xi )
|d\xi |.
\eeq
Here $a,b \in {\sf D}'$ and 
$\p_{n_{\xi}}$ is the normal derivative.
In particular, letting $b\to \infty$, we have:
\beq\label{G4}
\delta \log |w(z)|
=\frac{1}{2\pi} \oint_{\gamma}
\p_{n_\xi}G(z, \xi ) |w'(\xi )|\, \delta n (\xi )
|d\xi |,
\eeq
where the identity $\p_{n_\xi}\log |w(\xi )|=|w'(\xi )|$
for $\xi \in \gamma$ was used.
Further, letting here 
$z\to \infty$, we obtain a formula for variation of the
conformal radius:
\beq\label{G4a}
\delta \log r
=\frac{1}{2\pi} \oint_{\gamma} |w'(\xi )|^2\, \delta n (\xi )
|d\xi |\, .
\eeq
This is a key relation for what follows.

\subsubsection{Laplacian growth in terms of the Green function}

In terms of the conformal map $w(z)$ the solution for pressure
in (\ref{darcy2}) is 
\beq\label{pressure}
p(z)=-\log |w(z)|, \quad z\in {\sf D}'.
\eeq
This function satisfies the last three conditions in (\ref{darcy2}).
Thus the Darcy law reads:
\beq\label{darcy3}
v_n(\xi ) =\p_{n_\xi}\log |w(\xi )|=|w'(\xi )|, \quad \xi \in 
\gamma ,
\eeq
where the filtration coefficient $\kappa$ is set to be 1. 

In a more general case $\kappa$ can be some real-valued 
function on the plane: $\kappa = \kappa (z)$:
\beq\label{darcy4}
v_n(\xi ) =-\kappa (\xi )\, |w'(\xi )|, \quad \xi \in 
\gamma 
\eeq
(For notational
simplicity, we write simply $\kappa (z)$ keeping in mind that
it depends also on $\bar z$.) 
We will refer to this case as an inhomogeneous LG.
We assume that $\kappa (z)>0$ and 
$\displaystyle{\lim_{z\to \infty}
\kappa (z)=1}$.

Another more general case is that of a point-like source
at some finite point $a\in {\sf D}'$ rather than at infinity.
In this case the solution for $p(z)$ is
$p(z)=G(a, z)$, so instead of (\ref{darcy3}) we have:
\beq\label{darcy5}
v_n(\xi ) =-\p_{n_\xi}G( a, \xi ), \quad \xi \in 
\gamma .
\eeq
A ``mixture'' of the last two cases (i.e., inhomogeneous LG with
a source at a finite point) is also possible.

\section{Reminder: what is gradient of a function in 
$\RR^N$}

In order to somehow motivate the definitions 
given in the next section and make them clearer,
we will start with a brief reminder of some elementary
things related to gradient 
in $\RR^N$, presenting in the form convenient 
for the subsequent generalization.

The Euclidean space $\RR^N$ is equipped with
the standard scalar product 
\beq\label{scalar0}
({\bf v}, {\bf g})=\sum_{i=1}^N v_i g_i,
\eeq
where $v_i$ and $g_i$ are components of the vectors 
${\bf v}, {\bf g}\in \RR^N$.
With the scalar product, we can ignore the distinction 
between vectors and covectors.
The squared norm of a vector ${\bf g}$ is
$$
|{\bf g}|^2 =({\bf g}, {\bf g})=\sum_{i=1}^N g_i^2.
$$

A vector field in $\RR^N$ is a function on 
the tangent bundle $T\RR^N \cong \RR^N \oplus \RR^N$, 
or, in other words, an $\RR^N$-valued function
${\bf g}({\bf x})=(g_1({\bf x}), \ldots , g_N({\bf x}))$.
Here ${\bf g}({\bf x})$ should be understood as a tangent
vector at the point ${\bf x}\in \RR^N$ with components
$g_1, \ldots , g_N$.

Let $F({\bf x})=F(x_1, \ldots , x_N)$ be a differentiable 
$\RR$-valued function on $\RR^N$. The variation of the function
along a vector field ${\bf g}$
is defined as
\beq\label{N1}
\delta_{\bf g}F=F ({\bf x} +\epsilon{\bf g}({\bf x}) )-
F ({\bf x} ), \quad \epsilon \to 0_+.
\eeq
By definition, gradient of $F$ is a (co)vector field
$\gr F$ such that
\beq\label{N2}
\delta_{\bf g}F =\epsilon (\gr F , {\bf g}) +O(\epsilon^2)
\eeq
for any vector field ${\bf g}$. 
In what follows, we will not write $O(\epsilon^2)$ in such 
formulas for linear response. 

\begin{remark}\label{remark:velocity}
Applying equation (\ref{N1}) to the vector-valued function ${\bf x}$
and assuming that $\epsilon$ is an infinitesimally 
small time interval, 
$\epsilon =\delta t$, we see that the tangent vector ${\bf g}$ 
makes sense of velocity of a moving particle at the point ${\bf x}$.
\end{remark}

Since 
$$
\delta_{\bf g}F ({\bf x})=\epsilon \sum_{i=1}^N
g_i({\bf x})\frac{\p F ({\bf x})}{\p  x_i},
$$
$\gr F$ in components is written as
$$
\gr F =\Bigl (\frac{\p F}{\p x_1}\, , \ldots , \frac{\p F}{\p x_N}
\Bigr ).
$$

As is well known (and can be easily proved), variation of 
the function $F$ along vector fields ${\bf g}({\bf x})$ with
a fixed norm $|{\bf g}|=R$ is maximum for the vector field
$$
{\bf g}^{(0)}=\frac{R \, \gr F}{|\gr F |}\, .
$$
Informally speaking, when moving along the gradient, 
the increment of the function is maximized.

\section{Vector fields in the space of curves and Laplacian growth}

\subsection{Definitions}

Let ${\cal L}$ be the space of smooth non-self-intersecting
curves (loops) in the plane. For simplicity, they are assumed to be
smooth.
``Points'' of this space (curves)
will be denoted by $\gamma$. 

\begin{definition}
A tangent vector at a point $\gamma \in {\cal L}$ is a (smooth)
real-valued function on the curve $\gamma$.
\end{definition}

\noindent
The informal meaning of this definition is as follows.
Such a function (say, $g(z)$, where $z\in \gamma$) is to be thought
of as representing an infinitesimal deformation of the curve $\gamma$
given by $\delta n(z)=\epsilon g(z)$ with $\epsilon \to 0_+$.
Here $\delta n(z)$ is the normal displacement of the curve at the
point $z$, as before.
We will denote tangent vectors
as ${\bf g}={\bf g}(\gamma )$.

\begin{remark}\label{remark:velocity1}
Similarly to the finite-dimensional case 
(see Remark \ref{remark:velocity}), the tangent vector ${\bf g}(\gamma )$
at the point (a curve) $\gamma$ can be 
thought of as ``velocity of  
deformation of the curve'' meaning that the function $g(\xi )$ is
the normal velocity at the point $\xi \in \gamma$.
\end{remark}

\begin{definition}
A vector field in the space ${\cal L}$ is a family of tangent
vectors such that 
a tangent vector ${\bf g}(\gamma )$ 
is assigned to each point $\gamma \in {\cal L}$ and its dependence
on $\gamma$ is in a certain sense smooth\footnote{Here we will not 
discuss the precise meaning of this smoothness.}.
\end{definition}

\noindent
Let ${\cal F}={\cal F}(\gamma )$ be a functional on the space
${\cal L}$.

\begin{definition}\label{definition:variation}
Variation of a functional ${\cal F}$ along
a vector field ${\bf g}(\gamma )$ in the space ${\cal L}$ 
is
\beq\label{var}
\delta_{\bf g}{\cal F}={\cal F}
(\gamma^{(\epsilon {\bf g})} )-{\cal F}(\gamma ), \quad \epsilon
\to 0_+ \, ,
\eeq
where $\gamma^{(\epsilon {\bf g})}$ is the curve deformed by
the infinitesimal 
shifts $\delta n(\xi )=\epsilon g(\xi )$ at all
points $\xi \in \gamma$.
\end{definition}

\noindent
This is an analogue of equation (\ref{N1}).
In the notation of this definition,
the Hadamard formula (\ref{G3}) and its corollaries 
(\ref{G4}), (\ref{G4a}) read:
\beq\label{G3a}
\delta_{\bf g} G(a,b)=\frac{\epsilon}{2\pi} \oint_{\gamma}
\p_{n_\xi}G(a, \xi )\p_{n_\xi}G(b, \xi ) g(\xi )
|d\xi |,
\eeq
\beq\label{G4b}
\delta_{\bf g} \log |w(z)|
=\frac{\epsilon}{2\pi} \oint_{\gamma}
\p_{n_\xi}G(z, \xi ) |w'(\xi )|\, g(\xi )
|d\xi |,
\eeq
\beq\label{G4c}
\delta_{\bf g} \log r
=\frac{\epsilon}{2\pi} \oint_{\gamma} |w'(\xi )|^2\, g(\xi )
|d\xi |\, .
\eeq
These are our key relations.

Now we are ready to define a scalar product in the tangent space
at a ``point'' $\gamma \in {\cal L}$.

\begin{definition}\label{definition:scalar1}
Let ${\bf g}_1, \, {\bf g}_2$ be two tangent vectors at
$\gamma \in {\cal L}$. The scalar product is defined as
follows:
\beq\label{scalar}
({\bf g}_1, \, {\bf g}_2)_{\gamma} =\frac{1}{2\pi}
\oint_{\gamma} g_1(\xi )g_2(\xi ) \, |w'(\xi )| \, |d\xi |.
\eeq
\end{definition}

\begin{remark}
The factor $|w'(\xi )|$ under the integral has 
appeared due to the following considerations. The ``components''
of a continually infinite-dimensional tangent 
vector ${\bf g}$ should be 
labeled by some ``index'', like $i$ in (\ref{scalar0}). However,
the variable $\xi$ itself is not a good candidate for this because it
depends on the very ``point'' $\gamma$. If we set $w(\xi )=
e^{i\varphi}$, then $ |w'(\xi )| \, |d\xi |=d\varphi$ and
the variable $\varphi$ serves then as a universal 
$\gamma$-independent ``index''. A related reason is that
$|w'(\xi )| \, |d\xi |$ is dimensionless. So, if the
dimension of $z$ is $[{\rm length}]$, then the dimension
of the bilinear scalar product is $[{\rm length}]^2$, which is
quite desirable and natural.
\end{remark}

By analogy with the finite-dimensional case, we give the
following definition of the gradient\footnote{We denote it as 
$\Gr$ to avoid confusion with its finite-dimensional counterpart.}.

\begin{definition}\label{definition:Grad}
Let ${\cal F}$ be a functional on ${\cal L}$. The gradient
$\Gr {\cal F}$ is a vector field in ${\cal L}$ such that
\beq\label{Grad1}
\delta_{\bf g}{\cal F}=\epsilon (\Gr {\cal F}, {\bf g})_{\gamma}
\eeq
for any vector field ${\bf g}={\bf g}(\gamma )$.
\end{definition}
This is the analogue of (\ref{N2}).

\subsection{Laplacian growth as a gradient descent}

\begin{theorem}\label{theorem:main1}
The vector field in ${\cal L}$ 
corresponding to the deformation of curves according 
to Laplacian growth with a source at infinity (according to Darcy’s law)
is $-\Gr {\cal F}$, where ${\cal F}=\log (1/r)$ ($r$ is the 
external conformal radius of the compact domain bounded by the curve).
\end{theorem}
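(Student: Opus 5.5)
The plan is to compute the gradient of ${\cal F}=\log(1/r)=-\log r$ directly from Definition \ref{definition:Grad}, using the Hadamard-type formula (\ref{G4c}) as the only analytic input, and then to compare the result with the Darcy law (\ref{darcy3}).

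\emph{First step.} For an arbitrary vector field ${\bf g}$, formula (\ref{G4c}) gives
$$
\delta_{\bf g}{\cal F}=-\delta_{\bf g}\log r=-\frac{\epsilon}{2\pi}\oint_{\gamma}|w'(\xi)|^2\,g(\xi)\,|d\xi| .
$$
We rewrite the integrand as $\bigl(-|w'(\xi)|\bigr)\,g(\xi)\,|w'(\xi)|$. This exhibits the right-hand side as $\epsilon\,({\bf h},{\bf g})_{\gamma}$, with the scalar product (\ref{scalar}) and with the tangent vector ${\bf h}(\gamma)$ given by the function $h(\xi)=-|w'(\xi)|$ on $\gamma$.

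\emph{Second step: identify the gradient.} The function $h$ is smooth on $\gamma$ because $\gamma$ is smooth and $w'$ extends smoothly to the boundary. It also depends on $\gamma$ only through the conformal map, so it defines a vector field in ${\cal L}$. We must check that it is the unique such field. The weight $|w'(\xi)|$ is strictly positive on $\gamma$, since $w$ is conformal up to a smooth boundary. Hence (\ref{scalar}) is a nondegenerate (positive definite) bilinear form on smooth functions on $\gamma$. If two fields both satisfied (\ref{Grad1}) for all ${\bf g}$, their difference would be orthogonal to every ${\bf g}$ and therefore zero. So $\Gr{\cal F}$ is the function $-|w'(\xi)|$.

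\emph{Third step: compare with Darcy's law.} By Remark \ref{remark:velocity1}, the vector field generated by the growth is the normal velocity $v_n$ itself. By (\ref{darcy3}), with $\kappa=1$ and pressure (\ref{pressure}), this is $v_n(\xi)=|w'(\xi)|$. Therefore $v_n=-h$, that is, ${\bf V}=-\Gr{\cal F}$, which is the claim.

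I expect the only delicate point to be the identification step, not the computation. One has to make sure that the weight $|w'|$ put into the scalar product matches exactly one factor of $|w'|^2$ in (\ref{G4c}), so that the gradient comes out with no leftover weight. One also has to confirm the sign conventions: $\delta n>0$ means outward motion, and ${\cal F}=-\log r$ decreases as the domain grows. I would check these signs against the case of a disk of radius $r$, where $|w'|=1/r$ and $\dot r=1/r$. This gives $\frac{d}{dt}\log r=1/r^2$, which agrees with (\ref{G4c}) for $g=1/r$.
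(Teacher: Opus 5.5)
Your proposal is correct and follows the paper's own argument. The paper likewise combines the Hadamard-type formula (\ref{G4c}) with the weighted scalar product (\ref{scalar}), reads off $-\Gr\log(1/r)=|w'(\xi)|$ from (\ref{Grad1}), and matches this with Darcy's law (\ref{darcy3}). Your additional checks are not spelled out in the paper but are correct: uniqueness of the gradient via strict positivity of the weight $|w'|$, and the disk test of the sign conventions.
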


\noindent
{\it Proof.} This statement is an almost trivial consequence
of the definitions and the Hadamard formula. 
Namely, from (\ref{G4a}), 
(\ref{scalar}) and (\ref{Grad1}) we see that the vector field
$-\Gr (\log (1/r))$ is just $|w'(\xi )|$, i.e., the 
corresponding deformation
of the curve is $\delta n(\xi )=\epsilon |w'(\xi )|$, which is
Darcy's law of Laplacian growth with a source at infinity
(see (\ref{darcy3})).
\square

\noindent
Informally speaking,  
Laplacian growth means 
the fastest possible increase in the conformal radius of a growing 
domain.

\begin{remark}\label{remark:electrostatic}
The functional ${\cal F}=\log (1/r)$ has the following electrostatic
interpretation. It is the (regularized) energy of the electric
field created by the conductor ${\sf D}$ carrying the unit charge:
$$
{\cal E}({\sf D})
=\frac{1}{2\pi}\int_{\CCC \setminus {\sf D}}
\Bigl |\p_z G(z, \infty )\Bigr |^2 \, d^2 z.
$$
This integral logarithmically diverges at infinity. The regularization
consists in subtracting the similar integral for the unit disk ${\sf U}$,
then
$$
{\cal E}({\sf D})-{\cal E}({\sf U})=\log (1/r).
$$
\end{remark}

The two generalizations of LG (\ref{darcy4}) and (\ref{darcy5})
have a gradient nature, too. For the inhomogeneous LG we have
the following analogue of Theorem \ref{theorem:main1}:

\begin{theorem}\label{theorem:main2}
The vector field in ${\cal L}$ 
corresponding to the deformation of curves according 
to the inhomogeneous LG with the space-dependent filtration
coefficient $\kappa (z)$ and source at infinity
is $-{\bf Grad}^{(\kappa )}{\cal F}$, where ${\cal F}=\log (1/r)$
and ${\bf Grad}^{(\kappa )}$ is the gradient defined as in Definition
\ref{definition:Grad} with respect to the scalar product
\beq\label{scalar1}
({\bf g}_1, \, {\bf g}_2)_{\gamma}^{(\kappa )} =\frac{1}{2\pi}
\oint_{\gamma} g_1(\xi )g_2(\xi ) \, 
\frac{|w'(\xi )| \, |d\xi |}{\kappa (\xi )}.
\eeq
\end{theorem}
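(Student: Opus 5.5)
The proof follows the same pattern as that of Theorem \ref{theorem:main1}. The idea is to express the variation of ${\cal F}=\log(1/r)$ along an arbitrary vector field ${\bf g}$ via the Hadamard formula (\ref{G4c}), rewrite the integrand so that it becomes the $\kappa$-weighted scalar product (\ref{scalar1}) of ${\bf g}$ with an explicit tangent vector, and read off ${\bf Grad}^{(\kappa)}{\cal F}$ from Definition \ref{definition:Grad}.

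Concretely, (\ref{G4c}) gives
$$
\delta_{\bf g}\log(1/r)=-\frac{\epsilon}{2\pi}\oint_\gamma |w'(\xi)|^2 g(\xi)\,|d\xi| .
$$
I will multiply and divide the integrand by $\kappa(\xi)$, which is legitimate since $\kappa>0$. This gives
$$
\delta_{\bf g}\log(1/r)=\frac{\epsilon}{2\pi}\oint_\gamma \bigl(-\kappa(\xi)|w'(\xi)|\bigr)\, g(\xi)\,\frac{|w'(\xi)|\,|d\xi|}{\kappa(\xi)}
=\epsilon\,\bigl({\bf h},{\bf g}\bigr)^{(\kappa)}_\gamma ,
$$
where ${\bf h}$ is the tangent vector given by the function $h(\xi)=-\kappa(\xi)|w'(\xi)|$ on $\gamma$. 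This identity holds for every ${\bf g}$, so ${\bf Grad}^{(\kappa)}{\cal F}={\bf h}$ and hence $-{\bf Grad}^{(\kappa)}{\cal F}=\kappa(\xi)|w'(\xi)|$. The corresponding deformation is therefore $\delta n(\xi)=\epsilon\,\kappa(\xi)|w'(\xi)|$.

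To justify that the gradient is well defined, I will check that the scalar product (\ref{scalar1}) is positive definite on smooth functions on $\gamma$. This follows because the weight $|w'(\xi)|/\kappa(\xi)$ is strictly positive on $\gamma$: $\kappa>0$ by assumption, and $w'\neq 0$ on a smooth boundary since $w$ is conformal. Hence the representing vector ${\bf h}$ is unique.

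The final step is to compare with Darcy's law for the inhomogeneous case. From (\ref{darcy0}) with $p=-\log|w|$ one has $\partial_n p=-|w'|$, so $v_n=-\kappa\,\partial_n p=\kappa|w'|$. This agrees with the deformation found above. The only point needing care is a sign: (\ref{darcy4}) as printed carries a minus sign, but consistency with (\ref{darcy0}) and (\ref{darcy3}) (the case $\kappa=1$) shows that the growth velocity must be $+\kappa|w'|$. I will state this explicitly, and otherwise I expect no real obstacle.
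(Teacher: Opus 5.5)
Your proposal is correct and matches the paper's argument, which only says the proof repeats that of Theorem~\ref{theorem:main1}: rewrite (\ref{G4c}) as the $\kappa$-weighted scalar product (\ref{scalar1}) of ${\bf g}$ with $-\kappa|w'|$, so that $-{\bf Grad}^{(\kappa)}\log(1/r)=\kappa|w'|$. Your sign remark is also right: (\ref{darcy0}) with $p=-\log|w|$ gives $v_n=+\kappa|w'|$, consistent with (\ref{darcy3}) at $\kappa=1$, so the minus sign in (\ref{darcy4}) is a misprint.
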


\noindent
The proof almost literally repeats that of 
Theorem \ref{theorem:main1}. So, in this case 
the functional is the same but
the gradient is defined with respect to another scalar product.

For the LG with a source at a finite point $a$ the scalar product
is the same as in Theorem \ref{theorem:main1} but the functional is
different.

\begin{theorem}\label{theorem:main3}
The vector field in ${\cal L}$ 
corresponding to the deformation of curves according 
to Laplacian growth with a source at a finite
point $a$ 
is $-\Gr {\cal F}_a$, where ${\cal F}_a=\log |w (a)|$.
\end{theorem}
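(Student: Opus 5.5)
The plan is to repeat the argument of Theorem \ref{theorem:main1}, using the variation of $\log|w(z)|$ at a finite point, i.e. formula (\ref{G4b}), in place of the variation of $\log r$ in (\ref{G4c}). First I apply (\ref{G4b}) at the point $z=a\in{\sf D}'$. Since the point $a$ does not move while the curve is deformed, the left-hand side is exactly $\delta_{\bf g}{\cal F}_a$ in the sense of Definition \ref{definition:variation}. This gives
$$
\delta_{\bf g}\log|w(a)| = \frac{\epsilon}{2\pi}\oint_{\gamma}\p_{n_\xi}G(a,\xi)\,|w'(\xi)|\,g(\xi)\,|d\xi|
$$
for any tangent vector ${\bf g}$.

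Next I compare this with the scalar product (\ref{scalar}). The weight $|w'(\xi)|\,|d\xi|$ in (\ref{scalar}) is exactly the one appearing on the right-hand side above. Hence the right-hand side equals $\epsilon\,({\bf h},{\bf g})_\gamma$ with $h(\xi)=\p_{n_\xi}G(a,\xi)$. By Definition \ref{definition:Grad}, this means $\Gr{\cal F}_a$ is the function $\p_{n_\xi}G(a,\xi)$ on $\gamma$.

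The gradient is well defined, i.e. unique. The weight $|w'(\xi)|$ is strictly positive on a smooth curve, so the bilinear form (\ref{scalar}) is nondegenerate. Therefore a function whose pairing with every smooth $g$ vanishes must be zero, and the identity $\delta_{\bf g}{\cal F}_a=\epsilon(\Gr{\cal F}_a,{\bf g})_\gamma$ for all ${\bf g}$ determines $\Gr{\cal F}_a$ uniquely. Consequently $-\Gr{\cal F}_a$ is the tangent vector $-\p_{n_\xi}G(a,\xi)$, i.e. the deformation $\delta n(\xi)=-\epsilon\,\p_{n_\xi}G(a,\xi)$. This is precisely Darcy's law (\ref{darcy5}) for the source at $a$, with $\epsilon=\delta t$ as in Remark \ref{remark:velocity1}.

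The only point needing care is consistency of sign conventions. The normal derivative in (\ref{G3}) and (\ref{darcy5}) must be taken with the same orientation as the outward displacement $\delta n>0$. To check this, note that $G(a,\cdot)$ vanishes on $\gamma$ and equals $\log|z-a|+\ldots$ near $a$, so $G(a,\cdot)<0$ in ${\sf D}'$ by the maximum principle. Hence $-\p_nG(a,\xi)$ has the sign of physical growth, i.e. $v_n>0$. The same convention gives $\p_{n}\log|w|=|w'|>0$, which was used in deriving (\ref{G4}). So the signs match those already used in Theorem \ref{theorem:main1}, and no further computation is needed.
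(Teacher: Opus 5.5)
Your proposal is correct and takes exactly the paper's route: evaluate the Hadamard corollary (\ref{G4b}) at $z=a$, use the weight $|w'(\xi)|\,|d\xi|$ of the scalar product (\ref{scalar}) to read off $\Gr{\cal F}_a=\p_{n_\xi}G(a,\xi)$, and compare $-\Gr{\cal F}_a$ with Darcy's law (\ref{darcy5}). Your uniqueness remark and your sign check via $G(a,\cdot)<0$ in ${\sf D}'$ are correct additions that the paper leaves implicit.
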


\noindent
The proof is similar to that of Theorem \ref{theorem:main1}.
The only difference is that instead of corollary (\ref{G4a}) of the
Hadamard formula one should use more general equation (\ref{G4}).

\section{Discussion}

Let us discuss possible generalizations.

First, we note that ``gradient decent'' interpretation of LG
should be easily generalizable to the 3D case 
(and even to any dimensions). Although a proper definition of
the conformal radius in the 3D case is problematic, an analog
of Theorem \ref{theorem:main1} should be valid with the functional
defined as in Remark
\ref{remark:electrostatic}, i.e., as the electrostatic energy
of the conducting compact domain in 3D carrying unit charge.

Second, 
it would be interesting to find out whether some other 
boundary value problems with a free moving boundary 
allow for a similar gradient interpretation, such as the LG
with non-zero surface tension, as well as the problem of 
Stokes flows. Note that in the latter case, an infinite set of 
conserved quantities is
known to exist \cite{CHK97}.

The most interesting (albeit still hypothetical) prospect
is related to a possible introduction of a random noise,
similar to how this is usually done in finite-dimensional 
dynamical systems realized as gradient descent.
Namely, one can try to formulate a stochastic version of LG
as an infinite-dimensional generalization of the standard problem of a
dissipative system in a potential field 
with a random force (Gaussian noise), which is described by the 
Langevin equation. As is known, the Langevin equation leads,
with the help of It\^o calculus  (see, e.g. \cite{Klebaner}), 
to the Fokker-Planck equation
for the probability density. 

Coming back to the finite-dimensional case as a source of motivation,
the stochastic version of the deterministic 
``gradient descent'' system in a potential $U$,
$
\p_t{\bf x}=-\gr U({\bf x})
$,
is
\beq\label{d1}
\p_t{\bf x}=-\gr U({\bf x}) + {\bf f} (t),
\eeq
where ${\bf f}(t)=\{f_1(t), \ldots , f_N(t)\}$ is a 
Gaussian random force. Because of the random force, the trajectories
have some probabilistic distribution. It can be described by
a function $P({\bf x}, t)$, which is the probability density
for the system to have coordinates ${\bf x}$ at the time $t$. 
The Fokker-Planck equation for $P({\bf x}, t)$ reads
\beq\label{d2}
\p_t P=\eta \Delta P + (\gr U , \gr P)+\Delta U \cdot P ,
\eeq
where $\eta$ is a parameter characterizing strength of the random
force and $\Delta$ is the Laplace operator in $\RR^N$.

The idea is to find a proper analogue of the Fokker-Planck
equation in the space ${\cal L}$. Solving this equation 
would make it possible to find the most likely 
shape of the interface over long time periods.
However, there is an obvious difference from the standard 
formulation of problems involving gradient descent with noise.
In the latter case, the potential usually has a lot of local minima, 
and the task is to reach the lowest one. On the contrary, in the case of 
LG the functional $\log (1/r)$ decreases indefinitely to 
minus infinity as the area of the growing domain increases.
Therefore, the problem needs to be posed in a different way.
Indeed, of greatest interest is formation of a fractal structure of the interface boundary over long time scales, and for it to become clearly visible, the process needs to be normalized to a constant area. 
Then, the numerous branching fingers will become smaller 
and smaller in size, while their number increases, 
which ultimately may lead to a fractal structure.
To achieve this, simultaneously with the growth of the 
domain, according to Darcy’s law, the domain must be 
scaled (compressed) so that the area remains constant.
This will certainly modify Darcy's law.

\section*{Acknowledgments}
\addcontentsline{toc}{section}{Acknowledgments}

I am grateful to O. Alekseev and M. Mineev-Weinstein for numerous
discussions of problems related to Laplacian growth.

This article is an output of the research project 
(HSE-BR-2025-84) implemented as a part of the 
Basic Research Program at the National Research University 
Higher School of Economics (HSE University).

\end{document}